\documentclass{article}
\usepackage[preprint]{icml2026}

\usepackage{graphicx}
\usepackage{float}
\usepackage{booktabs}
\usepackage{pgfplots}
\usepackage{amsmath,amssymb,amsthm}
\usepackage{stmaryrd}
\usepackage{multirow}
\usepackage{url}
\usetikzlibrary{arrows.meta,positioning,calc,decorations.pathreplacing,shapes.arrows}
\pgfplotsset{compat=1.18}
\definecolor{scienceblue}{RGB}{76,114,176}
\definecolor{sciencered}{RGB}{196,78,82}
\definecolor{sciencepurple}{RGB}{130,80,180}

\icmltitlerunning{HyperParallel-Mpipe: A Composable Algebra System for Optimizing MLLM Training over Supernode Clusters}

\newtheorem{lemma}{Lemma}
\newtheorem{corollary}{Corollary}
\allowdisplaybreaks

\begin{document}
\twocolumn[
\icmltitle{HyperParallel-Mpipe: A Composable Algebra System for \\Optimizing MLLM Training over Supernode Clusters}

\begin{center}
\large
Chong Li$^{1}$, Zhengdao Yu$^{1}$, Nelson Lossing$^{1}$, Thibaut Tachon$^{1}$,\\ Pierre Leca$^{1}$, Etienne Filhol$^{1}$, Yujie Yuan$^{1}$, Chong Bao$^{2}$, Teng Su$^{2}$

\vskip 0.1in
$^{1}$Huawei Fourier Research Center, Paris, France\\
$^{2}$Huawei Technologies Co., Ltd, Hangzhou, China
\end{center}
\vskip 0.2in

\begin{abstract}
Modern AI applications have expanded beyond text-only interaction into a wide
range of multimodal scenarios, making multimodal large language models (MLLMs)
crucial for both research and industry. However, compared with traditional
decoder-only LLM training, large-scale MLLM training often shows much lower MFU.
We analyze the key pain points in MLLM training and introduce Mpipe, which uses
a schedule algebra to derive concrete runtime behavior from a compact schedule 
specification. From this algebra, Mpipe derives transpose, a
multimodal-aware heterogeneous parallel schedule that remaps modality-encoder
computation into otherwise idle pipeline regions.
On Ascend 910C NPU clusters, Mpipe achieves 2.70$\times$ speedup in a 
small-scale setting and 1.21$\times$ speedup in a 512-card large-scale setting.
\end{abstract}

]

\makeatletter
\icml@noticeprintedtrue
\makeatother

\section{Introduction}
Modern AI applications are no longer limited to text-only interaction. In many widely used scenarios, such as visual question answering, document understanding, video analysis, embodied agents, and content generation, 
models are expected to jointly understand language together with images, videos, and other modal inputs. This trend has made multimodal large language models (MLLMs) an increasingly important direction in both research and industry. 

As model parameters rapidly scale, training frontier models increasingly depends
on large accelerator clusters. For example, Llama~3 reports production training
on clusters with up to 16K GPUs~\cite{grattafiori2024llama3}. MLLMs further
increase this demand beyond decoder-only LLMs. In addition to the language
backbone, they must process high-resolution images, videos, and other modality
inputs through extra model components, which increases computation, memory
footprint, and data movement.

However, large-scale MLLM training often achieves much lower MFU (model FLOPs 
utilization) than decoder-only LLM training~\cite{zhang2025disttrain}. 
Figure~\ref{fig:mfu-gap} compares the MFU of representative LLM and MLLM 
training workloads. 

The dominant source of this gap is the modality encoder, which sits at the input
side of the pipeline: its highly variable work depends only on data available
before the LLM pipeline begins. Such source-side variance need not be chased at
runtime, by per-iteration schedule search~\cite{xue2026dip} or data-dependent
parallelism~\cite{megascaleomni}. It can instead be relocated off the critical
path into the pipeline's intrinsic warmup bubbles by a single static placement.
We propose Mpipe, which formalizes parallel schedules as a small composable
algebra and derives \emph{transpose}, a schedule that places encoder computation
in those bubbles, as one point in that algebra. The resulting schedule stays
static and invariant to the modality mix, with no runtime scheduling overhead.
Our contributions are:
\begin{itemize}
\setlength{\itemsep}{1pt}
\setlength{\topsep}{2pt}
\setlength{\parsep}{0pt}
\setlength{\partopsep}{0pt}
\item \textbf{A schedule algebra for parallel training.} We formalize parallel
pipeline schedules as a small composable algebra in which a schedule is a list
of per-region skeletons, and a single derivation maps a cut, schedule, and model
to concrete runtime behavior: device placement, collective communication,
and execution order, from which we read a cost model that predicts the step makespan. 
We prove a backward-footprint lemma and a schedule-invariance corollary: one schedule 
is correct whether the modality encoder is frozen or trained.
\item \textbf{Transpose.} From this algebra Mpipe derives \emph{transpose}, a
static schedule that replicates the encoder across pipeline ranks, runs it inside
the warmup bubbles, and gathers its output into the first LLM stage. The
placement is decided once, adds no runtime scheduling overhead, and is invariant
to the modality mix.
\item \textbf{End-to-end results.} On Ascend 910C NPU clusters, Mpipe delivers
2.70$\times$ speedup in a small-scale setting and 1.21$\times$ at 512 cards, with
no change to the training loss.
\end{itemize}

\begin{figure}[t]
\centering
\begin{tikzpicture}
\begin{axis}[
    width=\columnwidth,
    height=4.9cm,
    ybar,
    bar width=14pt,
    ymin=0,
    ymax=65,
    ytick={0,20,40,60},
    yticklabels={0\%,20\%,40\%,60\%},
    symbolic x coords={llama,deepseek,qwenvl,vitllama},
    xtick={llama,deepseek,qwenvl,vitllama},
    xticklabels={
        {LLM\\LLaMA3},
        {LLM\\DeepSeek},
        {MLLM\\QwenVL2},
        {MLLM\\ViT+LLaMA3\\+SD}
    },
    x tick label style={font=\scriptsize, align=center, text width=1.45cm},
    title={MFU (Model FLOPs Utilization)},
    title style={font=\normalsize},
    axis lines*=left,
    axis line style={draw=none},
    tick style={draw=none},
    ymajorgrids=true,
    major grid style={draw=gray!25},
    enlarge x limits=0.12,
    nodes near coords={\pgfmathprintnumber[fixed,precision=2,zerofill=false]{\pgfplotspointmeta}\%},
    every node near coord/.append style={font=\small, yshift=2pt, black},
    point meta=y,
    legend style={draw=none}
]
\addplot+[draw=none, fill=scienceblue, bar shift=0pt] coordinates {(llama,41)};
\addplot+[draw=none, fill=scienceblue, bar shift=0pt] coordinates {(deepseek,44.57)};
\addplot+[draw=none, fill=sciencered, bar shift=0pt] coordinates {(qwenvl,13.5)};
\addplot+[draw=none, fill=sciencered, bar shift=0pt] coordinates {(vitllama,20)};
\end{axis}
\end{tikzpicture}
\caption{MFU comparison between representative LLM and MLLM training workloads. The MLLM measurements are sourced from OrchMLLM~\cite{zheng2025orchmllm} and DistTrain~\cite{zhang2025disttrain}.}
\label{fig:mfu-gap}
\end{figure}

\section{From LLM to MLLM Training}

Beyond traditional NLP tasks, MLLMs must handle diverse inputs such as images,
videos, or audio before the language model can reason over them. A typical MLLM
therefore contains modality encoders for parsing these inputs, an LLM backbone
for reasoning, and sometimes modality generators for output synthesis. Between
these components, projector layers are commonly used to align modality
representations with the LLM backbone feature space (e.g.,
NExT-GPT~\cite{wu2023nextgpt}). MLLM training also often proceeds in multiple
stages: to align feature spaces across modalities, some modules may be frozen
while projector or alignment layers are updated.

The key structural change is that MLLMs no longer execute a single homogeneous
Transformer stack. They separate modality parsing from language reasoning, and
this separation naturally introduces components with different functional roles.
A ViT-style encoder~\cite{dosovitskiy2021image} extracts and aligns modality
features, the aligned tokens are then reasoned over in the LLM backbone, and the
reasoning result may be passed to modality generators such as
U-Net~\cite{ronneberger2015unet} or DiT~\cite{peebles2023scalable} to synthesize
modality-specific outputs. Therefore, the encoder, projector, LLM backbone, and
generator have naturally different responsibilities and process different
workloads; they are not interchangeable copies of the same repeated block, but
form a heterogeneous model pipeline.

The input side adds another source of difficulty. Modality content is often much
heavier than text: a long text sequence may only occupy a few kilobytes, while a
high-resolution image can occupy tens of megabytes and high-definition video can
reach gigabytes. It is also more variable: images may have different
resolutions, videos may have different lengths, and some samples may be pure
text with no additional modality content at all. After preprocessing, these
differences become different modality token sequence lengths. Since
attention-based encoder computation depends strongly on sequence length,
high-resolution images or long videos can require much more computation and
activation memory than small images or text-only samples. In MLLM training, both
the ratio and the workload of modality content can vary across samples. For
example, vision-language corpora often mix images with very different
resolutions, such as high-resolution photographs and low-resolution web images,
making the encoder workload vary dynamically from one microbatch to the next.

\begin{table}[t]
\centering
\caption{Approximate component heterogeneity under one MLLM training
configuration. ViT computation is measured with visual length fixed to $4\times$
the LLM backbone sequence length, where the LLM backbone sequence length is 8K.}
\label{tab:component-heterogeneity}
\resizebox{\columnwidth}{!}{%
\begin{tabular}{l c c c}
\toprule
Component & Parameter memory & Activation memory & Computation \\
 & (per layer) & (recomputed) & (FLOPs) \\
\midrule
LLM backbone & $\sim$4.95 GB & $\sim$1.79 GB & $\sim 4.76\times10^{18}$ \\
ViT encoder & $\sim$0.01 GB & $\sim$1.63 GB & $\sim 2.88\times10^{17}$ \\
\bottomrule
\end{tabular}%
}
\end{table}

At the component level, modality encoders are often much narrower than the LLM
backbone, but they can still introduce comparable activation memory and
non-negligible computation. Table~\ref{tab:component-heterogeneity} gives one
representative profile: the LLM backbone dominates parameter memory, while the
ViT encoder has much smaller parameter memory but comparable activation memory
and non-negligible computation. In our 512-NPU training traces, under pipeline
parallel (PP) execution, microbatches with few or no images spend only about 
200--300 ms in encoder-stage computation, whereas image-heavy microbatches can 
take more than 1 s. Such instability inserts dynamic bubbles into the pipeline.

In summary, efficient MLLM training faces two challenges. First,
\emph{component heterogeneity}: modality encoders and LLM backbones have
different memory and computation profiles. Second, \emph{dynamic encoder
workloads}: multimodal inputs change the amount of encoder computation and
activation memory at microbatch granularity. These challenges motivate the
system-level comparison in the next section.

\section{Limitations of Existing MLLM Training Systems}\label{sec:limits}

Training MLLMs in practice usually relies on AI training frameworks, such as
PyTorch~\cite{li2020pytorchddp}, Megatron-LM~\cite{shoeybi2019megatron}, and
Hyper-Parallel\footnote{\url{https://atomgit.com/mindspore/hyper-parallel}}.
These frameworks manage low-level training details and expose reusable
parallelism templates for deploying large models. This abstraction works well
for repeated Transformer stacks, but becomes challenging when applied to
heterogeneous and dynamic MLLM workloads.

Tensor parallelism~\cite{shoeybi2019megatron} is effective for splitting large
hidden matrices in a repeated Transformer stack, but the same partition may
provide little memory relief for a much narrower encoder. Pipeline
parallelism~\cite{huang2019gpipe} distributes layers across stages, but encoder
stages and LLM backbone stages can carry very different workloads, injecting
substantial bubbles into the pipeline. Dynamic encoder workloads further
complicate standard parallel execution and amplify the pipeline effect above.
Data parallelism~\cite{dean2012large} assigns different input subsets to
different replicas, so an image-heavy replica can become a straggler while a
text-only replica waits. Sequence and context
parallelism~\cite{korthikanti2022reducing,liu2023ringattention} can reduce
long-sequence attention memory, but applying the same sequence split to small or
empty modality inputs may introduce communication without much useful memory
reduction.

The preceding section identifies two challenges for efficient MLLM training:
component heterogeneity and dynamic encoder workloads. Beyond these
general-purpose templates, existing MLLM training systems approach these
challenges from three directions.

\textbf{Spatial resource partitioning.}
DistTrain~\cite{zhang2025disttrain} assigns hardware resources separately to
encoders, LLM backbones, and generators according to their module-level load, and
customizes parallel strategies for different modules. This cluster-level
partitioning can match a specific encoder workload, but such load-specific
partitioning is hard to keep effective under dynamic encoder workloads.

\textbf{Temporal bubble exploitation.}
Optimus~\cite{feng2025optimus} adjusts the temporal schedule to exploit LLM
backbone pipeline bubbles for MLLM training. It first analyzes computation
dependencies, then profiles encoder computation, decomposes it into kernel-level
units, and temporally schedules these units into LLM backbone pipeline bubbles. This fine-grained scheduling can reduce exposed
bubbles by overlapping encoder computation with otherwise idle pipeline slots.
However, this schedule depends on the profiled encoder workload. When encoder
workloads change dynamically, the values obtained from a static profile become
less reliable, making the pre-planned overlap less effective and leading to
suboptimal results.

\textbf{Runtime load balancing.}
DIP~\cite{xue2026dip} partitions the encoder and backbone into dedicated,
modality-aware pipeline segments across the same ranks to alleviate the load
imbalance between encoder stages and LLM backbone stages. It also prefetches
metadata for upcoming microbatches and splits large modality workloads to
improve balance. Nevertheless,
DIP still keeps the encoder coupled inside the LLM backbone pipeline. It mitigates the
mismatch through this placement and runtime load balancing, but the
architectural heterogeneity of MLLMs remains part of the pipeline itself.
Moreover, runtime metadata prefetching, workload splitting, and schedule search
are overlapped off the critical path at the scales DIP evaluates, so their cost
at much larger cluster scale remains undemonstrated.

Across these three directions, the encoder remains coupled inside the LLM
backbone pipeline, and each system pays for heterogeneity and dynamic workloads
through profiling, search, or runtime balancing. As a result, a static plan must
either be recomputed every iteration or risks being invalidated by the next
microbatch. Mpipe takes a different direction: it relocates encoder computation
out of the critical path and into the pipeline's intrinsic warmup bubbles through
a single static placement, so it requires no per-iteration profiling or search
and absorbs the dynamic workload rather than planning around it. We present this
design in the next section.

\section{Mpipe}\label{sec:overview}

Mpipe trains MLLMs, which pair a modality encoder with an LLM backbone. The two
blocks are heterogeneous, and that is the difficulty. The encoder is a source
block. Its cost depends on the input, such as image resolution, sequence length,
or the number of tiles. The LLM layers are uniform. If the encoder runs as a
normal pipeline stage, its variance is exposed on the critical path: it stretches
the stage it occupies, and the resulting bubble spreads down the pipeline.

Mpipe removes this exposure with \emph{transpose}. Transpose replicates the
encoder on every pipeline rank and runs each rank's encodes inside the warmup
bubble. Encoder work then overlaps the pipeline fill instead of taking its own
stage (Figure~\ref{fig:transpose}). We derive transpose from a small
\emph{schedule algebra} (\S\ref{sub:algebra}). A schedule is a list of per-region
\emph{skeletons}, one skeleton per depth region. A derivation maps it to runtime
behavior and fixes three things: where each region runs (placement), which
collectives cross region boundaries, and the order of forward and backward
events, including the warmup and cooldown slots. In this algebra Mpipe is
$\langle\mathsf{transpose},\mathsf{1f1b}\rangle$: a transpose skeleton on the
encoder, 1F1B on the backbone. The derivation gates backward work on the model's
trainable bits, not on the schedule. So one transpose schedule serves a frozen or
a trained encoder, unchanged. Section~\ref{sub:transpose} develops transpose as a
method.

\begin{figure}[t]
\centering
\includegraphics[width=\columnwidth]{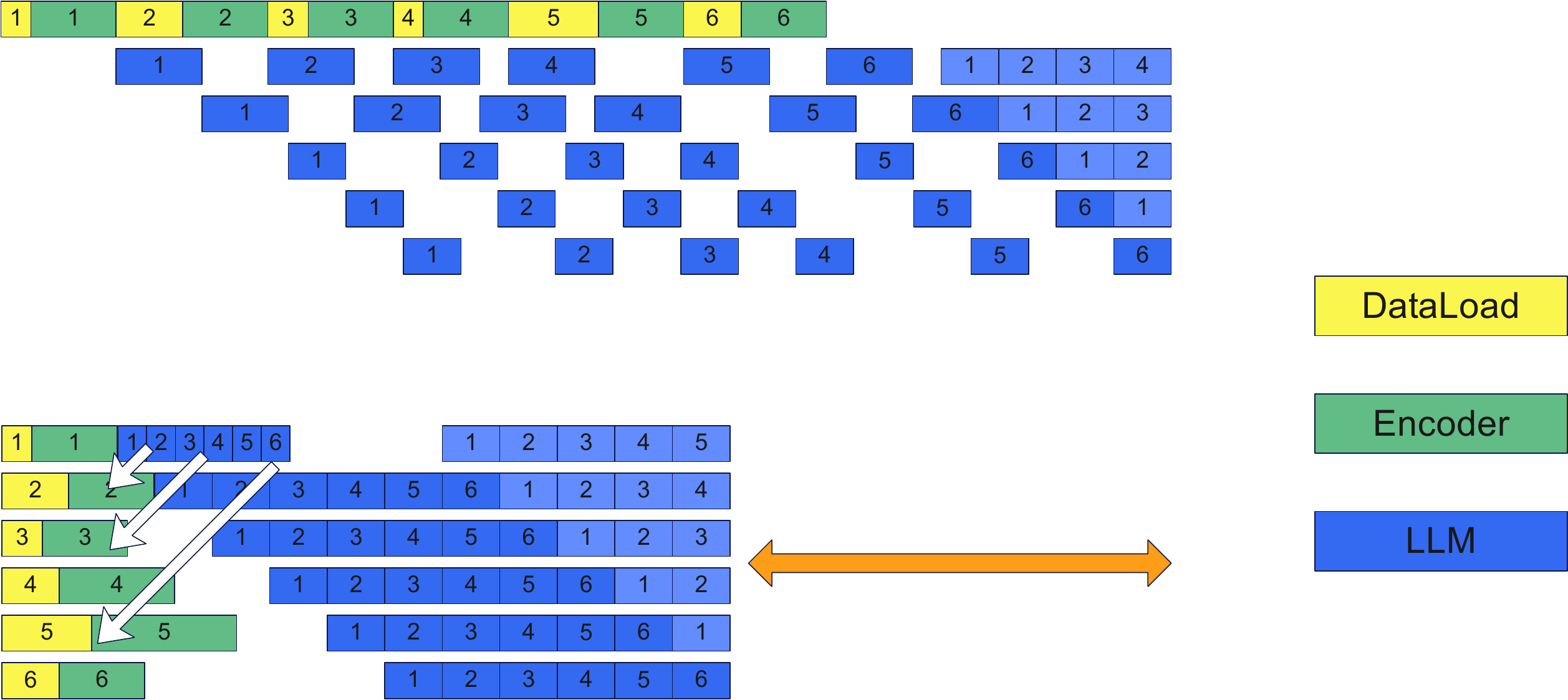}
\caption{Transpose overview. Instead of exposing modality encoder computation as
a regular pipeline stage, Mpipe transposes encoder execution into otherwise idle
warmup bubbles of the LLM pipeline.}
\label{fig:transpose}
\end{figure}

\subsection{Schedule Algebra}\label{sub:algebra}

A schedule applies one skeleton to each depth region of the model. A
\textit{model} is a sequence of layers $U=\langle u_1,\dots,u_L\rangle$. The
algebra does not depend on a region's role. This paper uses the encoder /
LLM-backbone split and leaves generator (sink) regions to future work. Two inputs
define a parallelization. A \textit{cut} sets $k{-}1$ interior boundaries
$c_1<\dots<c_{k-1}$ in $(0,L)$, with $c_0=0$ and $c_k=L$. It splits $U$ into $k$
contiguous regions $\Pi_1,\dots,\Pi_k$, one per interval. A \textit{schedule}
$\sigma=\langle\sigma_1,\dots,\sigma_k\rangle$ gives one skeleton to each region.
Composition is region concatenation. Each seam in Table~\ref{tab:seam} is a layout
change. The pair $\mathsf{Replicated}\!\to\!\mathsf{Replicated}$ is the only one
with no change: two replicated regions share one $\mathsf{ByMicrobatch}$ owner, so
the seam is the identity and the two regions collapse into one. We therefore drop
this seam and forbid adjacent replicated regions. Every valid schedule then has a
normal form in which transposes are separated by a sharded region. A schedule is
\textit{valid} when its cut tiles $U$ and it is in normal form. Both checks use
$(c,\sigma)$ only. So $\langle\mathsf{transpose},\mathsf{1f1b}\rangle$ and
$\langle\mathsf{transpose},\mathsf{1f1b},\mathsf{transpose}\rangle$ are valid,
while $\langle\mathsf{transpose},\mathsf{transpose}\rangle$ reduces to
$\langle\mathsf{transpose}\rangle$. The function $\operatorname{derive}$ maps a
valid $(c,\sigma)$ and the model to runtime behavior.

\begin{align*}
\textit{skeleton} \;&::=\; \mathsf{1f1b} \mid \mathsf{gpipe} \mid \mathsf{transpose} \\
\textit{schedule}  \;&::=\; \langle \textit{skeleton}_1,\dots,\textit{skeleton}_k \rangle \\
\operatorname{derive} \;&:\; \textit{cut} \times \textit{schedule} \times \textit{model} \\
   &\phantom{:}\;\;\to\; \textit{placement} \times \textit{collectives} \times \textit{order}
\end{align*}

A skeleton fixes a region's forward/backward order. Its one derived property is
the \textit{layout}: $\operatorname{layout}(\mathsf{transpose})=\mathsf{Replicated}$,
and $\operatorname{layout}(\mathsf{1f1b})=\operatorname{layout}(\mathsf{gpipe})=
\mathsf{Sharded}$. We write $\operatorname{layout}_j:=\operatorname{layout}(\sigma_j)$
for region $j$. A $\mathsf{Sharded}$ region pipelines its layers across the ranks.
A $\mathsf{Replicated}$ region ($\mathsf{transpose}$) puts the whole region on
every rank and runs data parallelism over the microbatches. Its forwards run in
the warmup bubble, and its backwards, when trainable, in the cooldown. Different
schedules are points in one space. VPP-3 is $\langle\mathsf{1f1b},\mathsf{1f1b},
\mathsf{1f1b}\rangle$: three $\mathsf{Sharded}$ regions woven into one wavefront.
Mpipe is $\langle\mathsf{transpose},\mathsf{1f1b}\rangle$.

\emph{Notation.} $P$ is the number of pipeline ranks
($\textit{rank}\in\{0,\dots,P{-}1\}$) and $M$ the number of microbatches
($\textit{mb}\in\{1,\dots,M\}$). A $\textit{region}$ is in $\{1,\dots,k\}$, and a
$\textit{slab}$ is a contiguous layer interval. $[X]$ is a list of $X$. $X^{?}$ is
an optional $X$, present only when trainability allows it (Lemma~\ref{lem:gate}).
The derivation reads the model in two places: each region's layer count, which
with the cut gives the per-rank slabs, and its $\operatorname{trainable}$ bit,
which gates all backward work. It then emits three outputs.

\begin{align*}
\textit{layout} \;&::=\; \mathsf{Sharded} \mid \mathsf{Replicated} \\
\textit{owner}  \;&::=\; \mathsf{ByLayer}(\textit{rank}\!\to\!\textit{slab}) \\*
                & \quad\;\;\, \mid \mathsf{ByMicrobatch}(\textit{mb}\!\to\!\textit{rank}) \\
\textit{comm}   \;&::=\; \mathsf{Gather} \mid \mathsf{Scatter} \mid \mathsf{Send} \mid \mathsf{RecvGrad} \\
\textit{group}  \;&::=\; \mathsf{ReplicaGroup} \mid \mathsf{DataGroup} \\
\textit{seam}   \;&:\; \textit{region} \times \textit{region} \\
\textit{event}  \;&::=\; \mathsf{Fwd}(\textit{region},\textit{rank},\textit{mb}) \\*
                & \quad\;\, \mid \mathsf{Bwd}(\textit{region},\textit{rank},\textit{mb}) \\*
                & \quad\;\, \mid \mathsf{Coll}(\textit{comm},\textit{seam},\textit{mb}) \\*
                & \quad\;\, \mid \mathsf{DpReduce}(\textit{region},\textit{group}) \\*
                & \quad\;\, \mid \mathsf{Loss}(\textit{mb}) \mid \mathsf{StepBarrier} \\
\textit{edge}   \;&::=\; \mathsf{Activation} \mid \mathsf{Gradient} \mid \mathsf{Turnaround} \\*
                & \quad\;\;\, \mid \mathsf{Accumulate} \mid \mathsf{Sequence} \\[4pt]
\textit{placement}   \;&:\; \textit{region} \to \textit{layout} \times \textit{owner} \\
\textit{collectives} \;&:\; \{\, \textit{seams}{:}\,[\textit{seam}\!\times\!\textit{comm}\!\times\!\textit{comm}^{?}], \\*
                     & \qquad\;\; \textit{reduces}{:}\,[\textit{region}\!\times\!\textit{group}] \,\} \\
\textit{order}       \;&:\; \{\, \textit{nodes}{:}\,\textit{rank}\!\to\![\textit{event}], \;\; \textit{edges}{:}\,[\textit{edge}] \,\}
\end{align*}

The unit of placed work is a \emph{stage}: the pair $(\textit{region},\textit{rank})$
that one rank runs. The runtime numbers it
$\textit{stage\_index}=\textit{rank}+(\textit{region}{-}1)P$, so VPP-3's three
regions are the three virtual stages each rank interleaves. A $\mathsf{Sharded}$
region emits one $\mathsf{Fwd}(\textit{region},r,m)$ per rank $r$, on its
$\mathsf{ByLayer}$ slab. A $\mathsf{Replicated}$ region emits one at
$r=\mathsf{ByMicrobatch}(m)$. \emph{Placement} gives each region its layout and an
\emph{owner} that splits its work; the $\mathsf{ByMicrobatch}$ owner assigns a
replicated region's microbatches to ranks. \emph{Collectives} is a derived view of
adjacent layouts. Each seam's forward $\textit{comm}$ comes from the layout pair
(Table~\ref{tab:seam}). Its backward $\textit{comm}^{?}$ fires under
Lemma~\ref{lem:gate}. The $\textit{reduces}$ field records each trainable region's
weight reduction over its $\textit{group}$: $\mathsf{ReplicaGroup}$ for a
transposed region, $\mathsf{DataGroup}$ for a sharded one. A $\mathsf{DataGroup}$
reduce is a no-op when that axis has one member, i.e.\ pure pipeline parallelism.
\emph{Order} is one dependency graph over the stage-indexed events. $\textit{nodes}(r)$
is its projection onto rank $r$, and $\textit{edges}$ are its five typed arcs. The
graph is acyclic by construction, so there is no cyclic-wait deadlock, and it
fixes the forward/backward turnaround and every activation's live interval.

\begin{table}[htbp]
\centering
\caption{Seam collective for each pair of adjacent layouts. The two cross-layout
rows are mirrors. $\mathsf{Replicated}\!\to\!\mathsf{Replicated}$ has no layout
change, so its seam is the identity. We omit it and forbid adjacent replicated
regions. This is the only missing entry, and it is what makes validity, and hence
composition, non-trivial.}
\label{tab:seam}
\small
\begin{tabular}{l l l}
\toprule
$\operatorname{layout}_j\!\to\!\operatorname{layout}_{j+1}$ & fwd & bwd \\
\midrule
$\mathsf{Replicated}\!\to\!\mathsf{Sharded}$    & $\mathsf{Gather}$  & $\mathsf{Scatter}$ \\
$\mathsf{Sharded}\!\to\!\mathsf{Sharded}$       & $\mathsf{Send}$    & $\mathsf{RecvGrad}$ \\
$\mathsf{Sharded}\!\to\!\mathsf{Replicated}$    & $\mathsf{Scatter}$ & $\mathsf{Gather}$ \\
$\mathsf{Replicated}\!\to\!\mathsf{Replicated}$ & \multicolumn{2}{l}{identity (omitted)} \\
\bottomrule
\end{tabular}
\end{table}

\begin{lemma}[Backward footprint]\label{lem:gate}
A region's backward work is fixed by the model, through three trainability gates.
Order the regions input-to-loss as $\Pi_1,\dots,\Pi_k$. For $\Pi_j$:
\emph{(i)} $\Pi_j$ \emph{sends} its input-activation gradient upstream (the
reverse collective on its input seam $(\Pi_{j-1},\Pi_j)$) iff
$\exists\, i<j:\operatorname{trainable}(\Pi_i)$;
\emph{(ii)} $\Pi_j$ \emph{receives} a gradient from downstream (the reverse
collective on its output seam $(\Pi_j,\Pi_{j+1})$) iff
$\exists\, i\le j:\operatorname{trainable}(\Pi_i)$;
\emph{(iii)} $\Pi_j$ reduces weight gradients iff $\operatorname{trainable}(\Pi_j)$.
The send and receive gates differ by one region. $\Pi_j$ receives but does not
send exactly when it is trainable and no region before it is: the trainable
\emph{source}.
\end{lemma}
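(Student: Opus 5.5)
The plan is to reduce all three gates to one semantic fact about reverse-mode differentiation over a chain of regions: the backward collective on a seam carries $\partial\mathcal{L}/\partial a$, where $a$ is the activation crossing that seam. That gradient is needed iff some trainable parameter lies upstream of the seam. I would first fix the dataflow. Because the cut tiles $U$ into contiguous regions ordered input-to-loss, the forward graph, read region by region, is the path $\Pi_1\to\Pi_2\to\dots\to\Pi_k\to\mathsf{Loss}$. The inputs themselves carry no parameters, so the only parameters that can need gradients are those of regions with $\operatorname{trainable}$ set. By the chain rule, the weight gradient of a region $\Pi_i$ depends on $\partial\mathcal{L}/\partial a_j$ for every seam activation $a_j$ downstream of $\Pi_i$, and on nothing upstream of it. Hence the gradient of the seam activation $a_{j-1}$ between $\Pi_{j-1}$ and $\Pi_j$ is required exactly when some $\Pi_i$ with $i\le j-1$ is trainable. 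I will take this minimality as the definition of what $\operatorname{derive}$ emits: no reverse collective fires unless its payload feeds some weight gradient.

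Gate (i) then follows directly. $\Pi_j$ sends upstream on its input seam $(\Pi_{j-1},\Pi_j)$ iff $\partial\mathcal{L}/\partial a_{j-1}$ is needed, which holds iff $\exists\, i<j:\operatorname{trainable}(\Pi_i)$. For gate (ii), $\Pi_j$ receives on its output seam iff $\partial\mathcal{L}/\partial a_j$ is needed. By the same argument this holds iff $\exists\, i\le j$ trainable, since the activation $a_j$ sits downstream of $\Pi_1,\dots,\Pi_j$. For $j=k$ the output seam is the loss itself, and I would treat $\mathsf{Loss}$ as supplying that gradient whenever any region is trainable, so the statement is consistent at the boundary. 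Note that the two reverse endpoints of one seam agree: $\Pi_j$'s receive on seam $(j,j{+}1)$ and $\Pi_{j+1}$'s send on the same seam are both governed by $\exists\, i\le j$. The reverse collective from Table~\ref{tab:seam} is therefore either emitted as a matched pair or omitted, and the order graph stays acyclic and deadlock-free. Gate (iii) is immediate, since a $\mathsf{DpReduce}$ over $\mathsf{ReplicaGroup}$ or $\mathsf{DataGroup}$ is emitted only for regions with a weight gradient, that is, only for trainable regions. None of this depends on the layout pair, so the gates hold for Gather/Scatter as well as Send/RecvGrad seams.

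The final claim comes from comparing the predicates. Receive-without-send means $\exists\, i\le j$ trainable and $\neg\exists\, i<j$ trainable. This is equivalent to $\operatorname{trainable}(\Pi_j)$ together with the absence of any earlier trainable region, i.e.\ $j$ is the least trainable index. That index is unique when it exists, which justifies calling it \emph{the} trainable source.

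The main obstacle is the first step, the claim that the backward footprint is exactly this minimal one, because the paper specifies $\operatorname{derive}$ only informally. I would handle it by making the construction explicit. $\operatorname{derive}$ emits $\mathsf{Bwd}$ and reverse $\mathsf{Coll}$ events by a backward sweep from $\mathsf{Loss}$ that maintains a flag ``gradient needed upstream.'' Scanning from $k$ down to $1$, the flag is initialized to $\exists\, i\le k$ trainable and updated to $\exists\, i\le j-1$ trainable. I would then check by induction on $j$ that the sweep matches both the necessity and the sufficiency directions above.
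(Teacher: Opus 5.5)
Your proposal is correct and takes essentially the same route as the paper. Both arguments identify the reverse collective on seam $(\Pi_{j-1},\Pi_j)$ with $\Pi_j$'s input-activation gradient, which is needed exactly when some earlier region is trainable because frozen regions still pass it through. Both then observe that $\Pi_j$'s receive is $\Pi_{j+1}$'s send, so the two gates differ only at $i=j$. Your explicit backward-sweep definition of $\operatorname{derive}$ and your convention at $j=k$ make precise what the paper's proof leaves implicit, and you also justify gate (iii), which the paper's proof does not argue.
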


\noindent\emph{Proof.} A region's input-activation gradient is used only upstream,
by $\Pi_{j-1}$ and earlier. A frozen region still passes it through. So $\Pi_j$
sends iff some region before it is trainable ($\exists\,i<j$). $\Pi_j$ receives
iff it runs any backward at all, that is iff $\Pi_j$ or some region before it is
trainable ($\exists\,i\le j$). The two index sets differ only at $i=j$. Finally,
$\Pi_j$'s receive is $\Pi_{j+1}$'s send (the collective carries exactly
$\Pi_{j+1}$'s input-activation gradient), so the seam gate and the region gate are
one predicate. \hfill$\square$

\begin{corollary}[Schedule-invariance]\label{cor:inv}
Gates (i) through (iii) use only $\operatorname{trainable}(\cdot)$ and the cut,
never $\sigma$. So a region's backward footprint does not depend on the schedule:
the encoder has the same footprint whether it is transposed ($\mathsf{Replicated}$)
or pipelined ($\mathsf{Sharded}$). The footprint says \emph{which} backward events
fire. The collective \emph{type} for each one still depends on $\sigma$ through
Table~\ref{tab:seam}. One $\mathsf{transpose}$ schedule therefore needs no
per-trainability variant: $\operatorname{derive}$ reads $\operatorname{trainable}$,
not $\sigma$.
\end{corollary}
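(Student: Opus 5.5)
The plan is to read the corollary directly off the form of the three predicates in Lemma~\ref{lem:gate}. I would treat the backward footprint of region $\Pi_j$ as the triple of Boolean values
\[
F_j=\bigl(\exists\,i<j:\operatorname{trainable}(\Pi_i),\ \exists\,i\le j:\operatorname{trainable}(\Pi_i),\ \operatorname{trainable}(\Pi_j)\bigr).
\]
The first step is to note which inputs each predicate reads. Each one quantifies over region indices fixed by the cut $c$, since the cut alone determines $k$ and the ordering $\Pi_1,\dots,\Pi_k$. Each one then evaluates the model's $\operatorname{trainable}$ bit on those regions. Neither $\sigma$ nor $\operatorname{layout}(\sigma_j)$ appears anywhere. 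So $F_j$ is a function of $(c,\text{model})$ only, and two valid schedules $\sigma,\sigma'$ over the same cut and model give identical footprints.

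The second step is to check that the lemma's proof did not use $\sigma$ implicitly. Its justification rests on three facts, none of which mention the skeleton of any region:
\begin{itemize}
\item an input-activation gradient is consumed only upstream;
\item a frozen region still passes gradients through;
\item $\Pi_j$'s receive coincides with $\Pi_{j+1}$'s send.
\end{itemize}
I expect this to be the main obstacle, because it requires that pass-through of gradients through a frozen region be equally available under both layouts. For a $\mathsf{Replicated}$ region, the whole region sits on the owner rank $\mathsf{ByMicrobatch}(m)$, so input gradients can be computed locally. For a $\mathsf{Sharded}$ region, they are relayed slab by slab via $\mathsf{RecvGrad}$. In both cases the gradient exists and flows iff the lemma's predicate holds, so the argument goes through for either layout.

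The third step specializes to the encoder $\Pi_1$. Its send gate is vacuously false, since no $i<1$ exists. Its receive gate and its reduce gate both equal $\operatorname{trainable}(\Pi_1)$. This footprint is the same whether $\sigma_1=\mathsf{transpose}$ or $\sigma_1\in\{\mathsf{1f1b},\mathsf{gpipe}\}$.

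Finally, I would separate which events fire from what type they are. The type of each fired seam collective is obtained by looking up the layout pair in Table~\ref{tab:seam}, and only this type depends on $\sigma$. For example, a $\mathsf{Replicated}\to\mathsf{Sharded}$ seam uses $\mathsf{Scatter}$ on the backward pass, while a $\mathsf{Sharded}\to\mathsf{Sharded}$ seam uses $\mathsf{RecvGrad}$.

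It follows that $\operatorname{derive}$ factors into two parts. The gate part depends on $(c,\operatorname{trainable})$ and decides existence. The type part depends on $\sigma$ and decides the collective kind. Consequently a single $\mathsf{transpose}$ schedule yields a correct derivation for either value of $\operatorname{trainable}(\Pi_1)$, with no per-trainability variant needed.
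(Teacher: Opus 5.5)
Your proposal is correct and follows the paper's route: the corollary is read directly off the form of the three gates in Lemma~\ref{lem:gate}, which range over indices fixed by the cut and read only $\operatorname{trainable}(\cdot)$. Your extra check that the lemma's proof never uses the layout is a sound elaboration the paper leaves implicit; one small caveat is that $\sigma$ also fixes the reduce group ($\mathsf{ReplicaGroup}$ versus $\mathsf{DataGroup}$) and the ranks of the $\mathsf{Bwd}$ events through the owner, so your phrase ``only this type depends on $\sigma$'' should cover those too, though this does not affect the conclusion.
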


\paragraph{Instantiation.}
Take Mpipe, $\langle\mathsf{transpose},\mathsf{1f1b}\rangle$. The encoder is
$\mathsf{enc}=\Pi_1$ ($\mathsf{Replicated}$) and the backbone is $\mathsf{llm}=\Pi_2$
($\mathsf{Sharded}$). A $(\cdot)^{?}$ is present only when the encoder is trainable.

\begin{align*}
&\textit{placement}: \;\;\mathsf{enc} \mapsto (\mathsf{Replicated},\mathsf{ByMicrobatch}),\\*
&\hphantom{\textit{placement}:} \;\;\;\mathsf{llm} \mapsto (\mathsf{Sharded},\mathsf{ByLayer}) \\[6pt]
&\textit{collectives}: \\*
&\quad \textit{seams} : [((\mathsf{enc},\mathsf{llm}),\, \mathsf{Gather},\, \mathsf{Scatter}^{?})] \\*
&\quad \textit{reduces} : [(\mathsf{enc},\mathsf{ReplicaGroup})^{?},\, (\mathsf{llm},\mathsf{DataGroup})] \\[6pt]
&\textit{order edges}\ (\text{microbatch } m,\ \textit{rank}\text{ dropped}): \\*
&\quad \mathsf{Fwd}(\mathsf{enc},m) \to \mathsf{Coll}(\mathsf{Gather},(\mathsf{enc},\mathsf{llm}),m) \\*
&\quad\quad \to \mathsf{Fwd}(\mathsf{llm},m) \to \mathsf{Loss}(m) \to \mathsf{Bwd}(\mathsf{llm},m), \\*
&\quad \mathsf{Bwd}(\mathsf{llm},m)\ \text{forks:} \\*
&\quad\quad \rightsquigarrow \mathsf{DpReduce}(\mathsf{llm},\mathsf{DataGroup}) \to \mathsf{StepBarrier} \\*
&\quad\quad \rightsquigarrow \mathsf{Coll}(\mathsf{Scatter},(\mathsf{enc},\mathsf{llm}),m)^{?} \to \mathsf{Bwd}(\mathsf{enc},m)^{?} \\*
&\quad\quad\quad \to \mathsf{DpReduce}(\mathsf{enc},\mathsf{ReplicaGroup})^{?} \to \mathsf{StepBarrier}
\end{align*}

The chain gives the $\textit{edges}$ for one microbatch, with the rank index dropped; 
$\rightsquigarrow$ marks a fork into independent branches. $\mathsf{Fwd}(\mathsf{enc},m)$ 
stands for $\mathsf{Fwd}(\mathsf{enc},\mathsf{ByMicrobatch}(m),m)$, and 
$\mathsf{Fwd}(\mathsf{llm},m)$ for the family over the $\mathsf{llm}$ slabs on every rank. 
The per-rank schedule $\textit{nodes}(r)$ is this graph projected onto rank $r$; 
Figure~\ref{fig:transpose} shows it schematically. Forward arcs are $\mathsf{Activation}$ 
edges. The $\mathsf{Loss}\to\mathsf{Bwd}$ pivot is a $\mathsf{Turnaround}$. Each 
$\mathsf{Bwd}$ emits an $\mathsf{Accumulate}$ edge to its $\mathsf{DpReduce}$. When its 
region has a trainable strict predecessor (Lemma~\ref{lem:gate}(i)), it also emits a
$\mathsf{Gradient}$ edge into its reverse seam. The two
$\mathsf{DpReduce}\to\mathsf{StepBarrier}$ arcs are independent $\mathsf{Sequence}$
edges, with no order between the two reduces. $\mathsf{Bwd}(\mathsf{llm},m)$ is
never gated, since the backbone is always trainable. But its $\mathsf{Gradient}$
edge, the $\mathsf{Scatter}^{?}$ that ships the LLM's input-activation gradient
back to the encoder, fires only when the encoder is trainable
(Lemma~\ref{lem:gate}(ii) at $\mathsf{enc}$ equals Lemma~\ref{lem:gate}(i) at
$\mathsf{llm}$). The encoder is the trainable \emph{source} of Lemma~\ref{lem:gate}.
It receives the $\mathsf{Scatter}^{?}$ and runs its weight backward
$\mathsf{Bwd}(\mathsf{enc},m)^{?}$. This is an $\mathsf{Accumulate}$ edge only: the
encoder has no predecessor, so no $\mathsf{Gradient}$ edge, and it sends nothing
upstream. The $\mathsf{Activation}$ edge from each warmup $\mathsf{Fwd}(\mathsf{enc},m)$
to its cooldown consumer is the long-lived encoder state.

\paragraph{A second derivation.}
$\operatorname{derive}$ is not tied to one schedule. VPP-3,
$\langle\mathsf{1f1b},\mathsf{1f1b},\mathsf{1f1b}\rangle$, places every region as
$(\mathsf{Sharded},\mathsf{ByLayer})$ and gives two interior seams,
$((\Pi_1,\Pi_2),\mathsf{Send},\mathsf{RecvGrad}^{?})$ and
$((\Pi_2,\Pi_3),\mathsf{Send},\mathsf{RecvGrad}^{?})$. The $\textit{seam}$ argument
of $\mathsf{Coll}$ is what tells them apart, which is why it is not optional. Each
rank owns one slab of all three regions, so the stage-indexed events are the
interleaved 1F1B wavefront. $\Pi_1$ again sends no input-activation gradient. The
same six events and five edges describe both schedules; only the cut and the
skeletons differ.

\paragraph{The same schedule, two footprints.}
Schedule-invariance (Corollary~\ref{cor:inv}) is one axis. Freezing is the other.
Freezing the encoder leaves $\sigma$, the placement, and the forward order
unchanged, but deletes every $(\cdot)^{?}$. The order collapses to
$\mathsf{Fwd}(\mathsf{enc},m)\to\mathsf{Coll}(\mathsf{Gather},\dots)\to
\mathsf{Fwd}(\mathsf{llm},m)\to\mathsf{Loss}(m)\to\mathsf{Bwd}(\mathsf{llm},m)\to
\mathsf{DpReduce}(\mathsf{llm},\mathsf{DataGroup})\to\mathsf{StepBarrier}$, with no
reverse $\mathsf{Scatter}$, no $\mathsf{Bwd}(\mathsf{enc})$, and no replica reduce.
The schedule comes from the user; the footprint comes from the model.

\paragraph{Memory as a derived view.}
Memory is read off $(\textit{placement},\textit{order})$ with no extra output. An
activation is live from when it is produced to its last use, and a rank's peak is
the running sum of live sizes. The transposed encoder is one long
$\mathsf{Activation}$ edge, live across the whole LLM span; transpose hides it in
the warmup bubble instead of paying for an extra sharded stage with the same
lifetime.

\paragraph{Applicability.}
The two composition axes, a depth cut with one skeleton per region and an owner
map that weaves $\mathsf{Sharded}$ regions into one wavefront, are enough for every
schedule we use: 1F1B, GPipe, interleaved VPP, and
$\langle\mathsf{transpose},\mathsf{1f1b}\rangle$. The algebra also extends in three
graded steps. A \emph{fold} (a reflected $\mathsf{ByLayer}$ owner plus a
$\mathsf{wave}$ skeleton) reaches Hanayo~\cite{hanayo}. An \emph{event split} (a
split backward plus a fused forward-backward node) reaches DualPipeV~\cite{dualpipev}.
Both are conservative: the present algebra is the degenerate case. A
\emph{bidirectional feed}, as in DualPipe~\cite{deepseekv3}, needs a new input and
stays outside. Our use-case sits inside the closed core. We treat $\mathsf{1f1b}$,
$\mathsf{gpipe}$, and $\mathsf{transpose}$ as composable parallel skeletons in the
sense of Cole~\cite{cole1989skeletons}, applied to the execution schedule of a
fixed model rather than to an algorithm.\footnote{The derivation is defined for
every valid cut. The present instantiation uses the two-region encoder/LLM cut;
intermediate cuts are future work.}

\subsection{Transpose}\label{sub:transpose}

Transpose is the $\mathsf{Replicated}$ skeleton of \S\ref{sub:algebra}, read as a
systems mechanism. Replicating the encoder on every rank is a scatter-gather step
in the sense of SGL~\cite{li2012sgl}. The replication scatters the microbatches
across the ranks. Each rank runs the encoder forward for the microbatches its
$\mathsf{ByMicrobatch}$ owner gives it, on its own, inside its warmup bubble. The
$\mathsf{Replicated}\to\mathsf{Sharded}$ seam then gathers the per-rank encoder
outputs into the first LLM stage (Figure~\ref{fig:transpose}). Encoder work
overlaps the pipeline fill instead of taking an extra stage on the critical path.
The owner map is a free choice, which Mpipe uses to match heavier encodes to
longer warmup slots.

This PP-internal replication is not the same as a separate data-parallel group
for the encoder. The encoder forward runs in the same pipeline group, and its
output feeds the first LLM stage at once through the seam gather. So Mpipe adds no
global aggregation and no cross-group synchronization on the encoder path. The
encoder output is used where it is made, inside the pipeline that needs it. We
keep the locality and independence of data-parallel work without a second
collective domain.

\subsection{Cost Model}\label{sub:cost}

The same two outputs that give memory, $\textit{placement}$ and $\textit{order}$,
also give a cost model. We weight each event with a duration
$\operatorname{cost}:\textit{event}\to\mathbb{R}_{\ge 0}$ and read the step time
off the order graph, so a schedule's cost is computed, not measured. A compute
event runs on one rank and costs the roofline of its slab; a collective moves an
$n$-byte payload over its group at the $\alpha$-$\beta$ rate of the SGL bridging
model~\cite{li2012sgl}:
\begin{align*}
\operatorname{cost}(\mathsf{Fwd})=\operatorname{cost}(\mathsf{Bwd}) &= \max\!\big(W/F,\ D/B\big), \\
\operatorname{cost}(\mathsf{Coll})=\operatorname{cost}(\mathsf{DpReduce}) &= \alpha + \beta\, n,
\end{align*}
where $W$ is the slab's FLOPs at peak rate $F$, $D$ its HBM traffic at bandwidth
$B$, and $n$ the payload bytes.

These weights turn $\textit{order}$ into a weighted graph in which each rank's
events follow their $\textit{nodes}(r)$ order. The step \emph{makespan} is its
longest path, and a rank's \emph{bubble} is its idle time:
\begin{align*}
T =             & \max_{p\,\in\,\mathrm{paths}(\textit{order})} \sum_{e\in p}\operatorname{cost}(e), \\
\mathrm{bub}(r) =&\quad T - \!\!\sum_{e\in\textit{nodes}(r)}\!\!\operatorname{cost}(e).
\end{align*}
SGL sums its supersteps in sequence; a pipeline overlaps them, so $T$ is a
longest path, not a sum of stage times. Because $T$ comes from the schedule
alone, it can be checked against the measured step time.

This makes $\mathsf{transpose}$ quantitative. Write $w_r$ for the warmup bubble on
rank $r$, its idle time before the first LLM forward, and $E_r$ for the encoder
work that rank runs,
$E_r=\sum_{m:\,\mathsf{ByMicrobatch}(m)=r}\operatorname{cost}(\mathsf{Fwd}(\mathsf{enc},r,m))$.
Transpose places $E_r$ in that window, so it hides $\min(E_r,w_r)$ and exposes
\begin{align*}
\mathrm{spill}_r = \max\!\big(0,\ E_r - w_r\big)
\end{align*}
on the critical path. When every $E_r\le w_r$ the encoder is free, and the step
pays only the worst rank's overflow, $\max_r \mathrm{spill}_r$. The
$\mathsf{ByMicrobatch}$ owner is the assignment that minimizes this exposed spill,
a min-max balancing of encoder work against per-rank warmup slack; matching
heavier encodes to longer warmup slots is its greedy form.

The model predicts where transpose helps. When the encoder is small against the
warmup slack, as when the backbone dominates at scale, $E_r\le w_r$ and the
encoder leaves the step untouched off the bubble; when it is large against the
slack, the residual $\max_r \mathrm{spill}_r$ is exposed and the gain shrinks.
This is the trend we measure in \S\ref{sec:exp}.

\paragraph{Compositional cost.}
The longest path above is a global property of the assembled graph, but it can
also be read compositionally. Give each schedule a \emph{timing} denotation
$\llbracket\sigma\rrbracket$: a function that maps the arrival times of its
inputs, one per rank and microbatch, to the completion times of its outputs.
Each such function is monotone and linear in the $(\max,+)$ semiring, since a
completion time is the maximum of an input-ready time and a resource-free time,
plus a duration. Concatenating two schedules $\sigma$ and $\sigma'$ composes
their denotations, with the seam $\mathsf{Coll}$ as the transfer function between
them: $\llbracket\sigma'\rrbracket\circ\llbracket\mathsf{Coll}\rrbracket\circ
\llbracket\sigma\rrbracket$. Cost is therefore a homomorphism from schedules to
$(\max,+)$ transfer functions, on the same composition the algebra already uses.
The step makespan is this composed function applied to the microbatch injection
times, and its steady-state period is the function's $(\max,+)$ eigenvalue, the
bottleneck stage. Transpose needs no special case: the warmup slack $w$ is just
when the LLM region becomes ready on each rank, part of the arrival times the
composition already carries, so the spill $\max(0,E-w)$ is what its transfer
function returns rather than a separate rule. For our acyclic, single-orientation
$\textit{order}$, the makespan read this way equals the operational longest path
above, the standard agreement between longest paths in a timed event graph and
$(\max,+)$ linear behavior~\cite{baccelli1992synchronization}. We state this as a
remark and leave the contention-aware proof, which threads the per-rank state
through the composition, to future work.

\paragraph{Scope.}
Mpipe handles the main, source-side form of MLLM data variance: the encoder
workload. The schedule algebra (\S\ref{sub:algebra}) moves it into pipeline
bubbles as a static schedule that does not depend on the modality mix. Three cases
fall outside this, and the same algebra states each one. \emph{(i)} When a
microbatch's encoder work is larger than the warmup slack, part of it spills onto
the critical path; the owner map reduces this spill but cannot remove it.
\emph{(ii)} Variance in the LLM \emph{sequence} length makes the backbone stages
themselves data-dependent. This is interior to the pipeline and is handled by
sequence bucketing or packing, not by placement. \emph{(iii)} Output-side
\emph{generators} are pipeline sinks. Their input comes from the backbone, not
from the start of the step, so they cannot be transposed into a bubble. In the
algebra they are just another $\mathsf{Sharded}$ region, for example
$\langle\mathsf{transpose},\mathsf{1f1b},\mathsf{1f1b}\rangle$, and are pipelined
rather than moved. Runtime systems cover these cases with per-iteration
search~\cite{xue2026dip} or data-dependent parallelism~\cite{megascaleomni}.
Mpipe keeps the schedule static and loss-preserving. It trades that coverage for
near-zero scheduling overhead and reproducible runs, and we evaluate it in the
encoder-bound case where this trade pays off.

\section{Experiments}\label{sec:exp}
We conduct experiments to evaluate the overall performance of Mpipe on MLLM
training workloads.

\textbf{Environment.} We implement and evaluate Mpipe in Hyper-Parallel. The
evaluation runs on a 512-device CloudMatrix384~\cite{zuo2025cloudmatrix384}.
Mpipe runs inside the framework's existing pipeline runtime: $\operatorname{derive}$
lowers a schedule to the placement, collectives, and order of \S\ref{sub:algebra},
and the $\mathsf{Replicated}\to\mathsf{Sharded}$ gather uses CANN/HCCL, with no
change to the LLM backbone's pipeline kernels.

\textbf{Experiment A: production-scale MLLM workload.} We design this
experiment to simulate the effect of Mpipe under practical industry MLLM
training workloads. The model is deployed on the full 512-device environment and
uses the ViT encoder from Qwen2-VL~\cite{wang2024qwen2vl} together with a
DeepSeek-V3~\cite{deepseekv3} LLM backbone. The dataset is an internal
multimodal dataset. Since
DistTrain has not released its implementation, we
construct a DistTrain-like baseline in our environment: one pipeline stage is
allocated to the encoder with a customized parallel strategy, while the LLM
backbone applies conventional 5D parallelism. Table~\ref{tab:experiment-a}
shows that Mpipe achieves 1.21$\times$ speedup over this baseline.

\begin{table}[t]
\centering
\caption{Experiment A: average end-to-end step time on a production-scale MLLM workload.}
\label{tab:experiment-a}
\resizebox{\columnwidth}{!}{%
\begin{tabular}{l r r r}
\toprule
Model & Baseline Avg. Step Time & Mpipe Avg. Step Time & Speedup \\
\midrule
ViT+DeepSeek & 16.26 s & 13.42 s & 1.21$\times$ \\
\bottomrule
\end{tabular}%
}
\end{table}

\textbf{Experiment B: generality study.} To quickly validate the
generality of Mpipe, we also run an 8-device experiment with a
recently released Qwen3.5-based MLLM~\cite{qwen2026qwen35omni} on the
CapsFusion image-text dataset~\cite{yu2023capsfusion}. The baseline follows
Megatron-LM~\cite{shoeybi2019megatron}. As shown in
Table~\ref{tab:experiment-b}, Mpipe reduces the average step time from 11.01 s
to 4.07 s, achieving 2.70$\times$ speedup.

\begin{table}[t]
\centering
\caption{Experiment B: average end-to-end step time on an 8-device workload.}
\label{tab:experiment-b}
\resizebox{\columnwidth}{!}{%
\begin{tabular}{l r r r}
\toprule
Model & Baseline Avg. Step Time & Mpipe Avg. Step Time & Speedup \\
\midrule
Qwen3.5 MLLM & 11.01 s & 4.07 s & 2.70$\times$ \\
\bottomrule
\end{tabular}%
}
\end{table}

\textbf{Analysis.} The relative speedup of transpose depends on the encoder's
share of the step and on how much warmup bubble is available to absorb it
(\S\ref{sub:cost}). This is why the two settings differ in magnitude: in
Experiment~A the large backbone dominates the step, so relocating the smaller
encoder yields $1.21\times$, whereas in the controlled setting of Experiment~B
the encoder is a larger share of the step and the gain reaches $2.70\times$.
Because the two settings also differ in model and baseline, they are not a
controlled scaling study; a sweep that isolates scale and the resulting
encoder-to-bubble ratio is left to future work. In both cases the gain comes from
a single static schedule with no per-iteration scheduling cost, and transpose
leaves the training loss unchanged, since it only relocates computation.

\section{Related Work}\label{sec:related}

\paragraph{Pipeline schedules.}
Pipeline parallelism partitions a model into stages and streams microbatches
through them. GPipe~\cite{huang2019gpipe} flushes each batch; 1F1B and
interleaved (VPP) schedules reduce the bubble at fixed memory; recent work pushes
further with wave-like (Hanayo~\cite{hanayo}), zero-bubble
(DualPipeV~\cite{dualpipev}), and bidirectional (DualPipe~\cite{deepseekv3})
schedules. Mpipe does not propose one more schedule in isolation. It treats these
as points in a single algebra: 1F1B, GPipe, and interleaved VPP are instances,
and the recent schedules are short, named extensions of it (\S\ref{sub:algebra}).
Within that algebra, $\langle\mathsf{transpose},\mathsf{1f1b}\rangle$ is the
schedule that handles the encoder/LLM heterogeneity of MLLM training.

\paragraph{MLLM training systems.}
DistTrain~\cite{zhang2025disttrain}, Optimus~\cite{feng2025optimus}, and
DIP~\cite{xue2026dip} target encoder/backbone heterogeneity through spatial
partitioning, temporal bubble profiling, and runtime load balancing
respectively; OrchMLLM~\cite{zheng2025orchmllm} and
MegaScale-Omni~\cite{megascaleomni} adapt resources and parallelism to dynamic
multimodal workloads in production. As discussed in \S\ref{sec:limits}, all of
these keep the encoder coupled inside the backbone pipeline and pay for
heterogeneity and dynamism at runtime. Mpipe instead relocates the encoder out of
the pipeline with a single static placement, which removes the per-iteration cost
and is invariant to the modality mix.

\paragraph{Schedules as composable skeletons.}
Mpipe's view of schedules as composable parallel patterns follows the line of
parallel algorithmic skeletons~\cite{cole1989skeletons}, which hide orchestration
behind structured collectives rather than raw point-to-point
messaging~\cite{gorlatch2004sendrecv}, and the BSP-structured, cost-predictable
skeleton libraries~\cite{loulergue2005bsml,javed2009osl,li2012sgl}. The
difference is one of level: those compose skeletons to express an algorithm,
whereas we compose them to express the execution schedule of a fixed model.

\section{Conclusion}
MLLM training exposes component heterogeneity and dynamic encoder workloads that
make conventional LLM parallelism less effective at scale. 
Mpipe casts parallel schedules as a small schedule algebra and derives their concrete pipeline behavior, including transpose, a heterogeneous parallel schedule for MLLM training. Experiments on Ascend 910C NPU clusters show that Mpipe reduces end-to-end
training step time on both small-scale and 512-card settings.

\paragraph{Limitations and future work.}
When per-microbatch encoder work exceeds the warmup slack, part of it spills onto
the critical path; the static owner map reduces this spill but cannot remove it,
and a loss-preserving metadata-guided reordering that closes the residual is left
to future work. We also instantiate transpose only on the two-region encoder/LLM
cut with 1F1B; the algebra supports richer schedules (a fold toward Hanayo, an
event split toward DualPipeV, a bidirectional feed toward DualPipe), whose
evaluation we leave to future work.

\bibliographystyle{icml2026}
\bibliography{references}

\end{document}